\newtheorem{defi}{Definition}
\newtheorem{lemma}[defi]{Lemma}
\newtheorem{coro}[defi]{Corollary}
\newtheorem{prop}[defi]{Proposition}
\newtheorem{theo}[defi]{Theorem}
\theoremstyle{definition}
\def\tr{{\rm tr}}
\def\PP{\mathbb{P}}
\def\hh{\mathfrak{h}}
\title{Purification of quantum trajectories in infinite dimensions}
\author[1]{Federico Girotti}
\ead[1]{federico.girotti@polimi.it}
\affiliation[1]{organization={Dipartimento di Matematica, Politecnico di Milano},
addressline={Via Edoardo Bonardi 9},
postcode={20133},
city={Milano},
country={Italy}}
\author[1]{Alessandro Vitale}
\ead[1]{alessandro3.vitale@mail.polimi.it}
\begin{document}
\begin{abstract}
In this work we exhibit a class of examples that show that the characterization of purification of quantum trajectories in terms of `dark' subspaces that was proved for finite dimensional systems (\cite{MK06,BP03}) fails to hold in infinite dimensional ones. Moreover, we prove that the new phenomenon emerging in our class of models and preventing purification to happen is the only new possibility that emerges in infinite dimensional systems. Our proof strategy points out that the emergence of new phenomena in infinite dimensional systems is due to the fact that the set of orthogonal projections is not sequentially compact. Having in mind this insight, we are able to prove that the finite dimensional extends to a class of infinite dimensional models.
\end{abstract}

\maketitle

\makeatletter
\def\ps@pprintTitle{%
  \let\@oddhead\@empty
  \let\@evenhead\@empty
  \let\@oddfoot\@empty
  \let\@evenfoot\@oddfoot
}
\makeatother

\section{Introduction}

Quantum trajectories are stochastic processes describing the evolution of quantum systems undergoing repeated indirect measurements. They were first introduced in the description of continuously monitored quantum systems (see for instance \cite{Da76,BLP82,BL85,Be89,Be89b}) and as useful tools for computations in open quantum systems (see \cite{Da14} and references therein). Remarkably, quantum trajectories were also employed in the theoretical description of the experiments conducted by Serge Haroche's group (\cite{GBD07,HR06}) and of the results obtained by David Wineland (\cite{LBMW03,Wi13}). In this work we will focus on quantum trajectories in discrete time, which can be seen as discretizations of continuous time models (\cite{Pe10}).

\bigskip If the measurement is perfect, i.e. there is no information flowing into the system and all the information leaking from the system is observed, the set of pure states is a closed set for the dynamics: if the systems starts in a pure state, then it is in a pure state at every time. A natural question is under which conditions the set of pure states is also attractive, in the sense that the state of the system tends to `purify' almost surely for long times and for every initial state; aside from its own interest, there are several motivations for studying purification: first of all, assuming  that the system purifies almost surely and that it satisfies another irreducibility assumption is key in proving several results concerning quantum trajectories, such as uniqueness of the invariant measure, convergence to the unique invariant measure and limit theorems for empirical averages of a wide class of functionals (see \cite{BFPP19,BFP23,BHP25}). Moreover, measurement driven purification is a promising way of preparing pure states and the study of the dependence of the purification time on the system size for different measurement strengths has been investigated in several recent works (\cite{GH20, DLLNZ24} and references therein).

\bigskip If the system has finitely many degrees of freedom, a characterization of purification was found in \cite{MK06} (see \cite{BP03} for the same result in continuous time and \cite{BFPP19} for an alternative proof): purification occurs unless the dynamics hits upon a family of `dark' subspaces, i.e. subspaces from which there is no leak of information. In this work we make a first step in trying to understand better purification in infinite dimensional systems; we present a class of models for which purification does not hold even if there are not any `dark' subspaces, showing that, in general, the characterization for finite dimensional systems does no longer hold in infinitely many dimensions. The rationale behind the class of examples is that the dynamics moves along a family of subspaces that become closer and closer to `dark' subspaces. The main result of this work is showing that this is a necessary condition for purification to fail (Theorem \ref{theo:main}); the proof makes use of the infinite dimensional version of Nielsen's inequalities (\cite{Ni01}) and adapts the techniques employed in \cite{BP03} to the infinite dimensional and discrete time case. As a side result, we provide an alternative proof of the characterization of purification in terms of `dark' subspaces in discrete time with respect to the one in \cite{MK06} (Corollary \ref{coro:fd}) and in \cite{BFPP19} (Proposition 2.2). Moreover, in Proposition \ref{prop:sm} we show that such a characterization still holds for a class of models with an increasing sequence of finite dimensional subspaces where trajectories `remain confined'. 

\bigskip We point out that our results are not the first ones regarding purification in infinite dimensional settings: in \cite{BBP24}, the authors show that purification holds for a particular model describing an atom maser, building on the technique used in \cite{BFPP19}. Moreover, \cite{Li10} considers quantum trajectories satisfying a technical assumption regarding both the initial state and the measurement; adapting the techniques used in \cite{MK06}, the author shows that, under such assumptions, the failure of almost sure purification implies that there exists a non trivial `dark' subspace.

The structure of the paper is the following: Section \ref{sec:pre} introduces the notation and the problem studied, in Section \ref{sec:main} we present the class of examples showing that the characterization of purification in terms of `dark' subspaces does not hold in the infinite dimensional case and we prove the main theorem (Theorem \ref{theo:main}) concerning the necessary condition for the failure of purification of quantum trajectories. Finally, Section \ref{sec:ds} shows how the finite dimensional result follows from Theorem \ref{theo:main} and provides a class of infinite dimensional models for which the characterization of purification in terms of `dark' subspaces still holds true. 

\section{Preliminaries and notation}\label{sec:pre}

Let us consider a quantum system described by a separable Hilbert space $\hh$. We will identify quantum states with their densities matrices, which are positive semidefinite trace class operators with unit trace.

\bigskip \noindent \textbf{Single measurement.} We assume that the system interacts with an ancillary one, described by the separable Hilbert space $\hh_a$ and initially prepared in a given pure state $\ket{\chi}\bra{\chi}$, according to a certain unitary operator $U:\hh \otimes \hh_a \rightarrow \hh\otimes \hh_a$. After the interaction, the measurement corresponding to a certain orthonormal basis $\{\ket{i}\}_{i \in I}$ is performed on the ancillary system. Therefore, according to Born's rule, if the system is initially prepared in the state $\rho$, one observes the outcome $i$ with probability $\tr(\rho a_i^* a_i)$ and the state of the system conditional to such outcome is given by
$$\frac{a_i \rho a_i^*}{\tr(\rho a_i^* a_i)},$$
where $a_i:=\bra{i}U\ket{\chi}$ are called Kraus operators. Notice that $\sum_{i \in I}a_i^*a_i=\mathbf{1}$, where the convergence of the series has to be intended in the strong operator topology. The described quantum measurement is said to be \textit{perfect} because the measurement on the ancillary system is nondegenerate and because the ancilla is initially prepared in a pure state.

\bigskip \noindent\textbf{Repeated measurements.} Let us assume to repeat the previous procedure infinitely many times, always considering a new ancilla at every step; the stochastic process describing the state of the system conditional to the sequence of outputs is called quantum trajectory and will be the central object in our investigation. Let us define the process formally: let $\Omega$ be the set of infinite sequences with entries in $I$ and let us consider cylinder sets, i.e. those sets of the form
$$\Lambda_{i_1,\dots, i_m}:=\{\omega=(\omega_1,\dots, \omega_n, \dots)  \in \Omega: \omega_k=i_k, \, k=1,\dots, m\}, \quad m \in \mathbb{N}^*, \, i_1,\dots, i_m \in I.$$
We denote by ${\cal F}_m$ the $\sigma$-field generated by cylinder sets corresponding to outcomes sequences of length $m$ (${\cal F}_0:=\{\emptyset, \Omega\}$) and by ${\cal F}$ the $\sigma$-field generated by all cylinder sets. For every initial state of the system $\rho$ there exists a unique probability measure $\PP_\rho$ on $(\Omega, {\cal F})$ satisfying
$$\PP_\rho(\Lambda_{i_1,\dots,i_n})=\tr(\rho a_{i_1}^* \cdots a_{i_n}^* a_{i_n} \cdots a_{i_1}).$$
This can be seen via Kolmogoroff extension theorem. We will denote by $\mathbb{E}_\rho$ the corresponding expected value. We are now ready to introduce the quantum trajectory $(\rho_n)_{n \geq 0}$ as the adapted stochastic process on $(\Omega, ({\cal F}_n)_{n \geq 0}, {\cal F})$ given by
$$\rho_0(\omega):=\rho, \quad \rho_n(\omega):=\frac{a_{\omega_n} \cdots a_{\omega_1} \rho a_{\omega_1}^* \cdots a_{\omega_n}^*}{\tr(\rho a_{\omega_1}^* \cdots a_{\omega_n}^* a_{\omega_1} \cdots a_{\omega_n})}.$$
Notice that $(\rho_n)_{n \geq 0}$ is a Markov process with transitions governed by the following kernel:
\[
\rho_{n+1}=\frac{a_i \rho_n a_i^*}{\tr(a_i \rho_n a_i^*)} \text{ with probability } \pi_{i,n}:=\tr(\rho_n a_i^*a_i), \quad i \in I.
\]

\bigskip \noindent \textbf{Purification.} A state $\rho$ is called pure if it is a rank one projector, i.e. if there exists a vector $\ket{\psi} \in \hh$ such that $\rho=\ket{\psi}\bra{\psi}$; pure states correspond to extremal elements of the set of states. Notice that perfect measurements preserve the set of pure states: indeed, if $\rho=\ket{\psi}\bra{\psi}$, then for every $\omega \in \Omega$ one has
$$\rho_n(\omega)=\frac{\ket{a_{\omega_n} \cdots a_{\omega_1}\psi}\bra{a_{\omega_n} \cdots a_{\omega_1}\psi}}{\|a_{\omega_n} \cdots a_{\omega_1}\psi\|^2}.$$

As we already mentioned in the introduction, we want to study in more detail those cases in which not only the set of pure states is closed for the dynamics, but also the measurement tends to purify every initial state for long times. In order to give a mathematical definition of this phenomenon, we need to introduce a measure of how far a state is from being pure. We will use the linear entropy of a state, i.e. the quantity $$g(\rho):=\tr(\rho(\mathbf{1}-\rho))=1-\tr(\rho^2).$$

$\mathbf{1}$ denotes the identity operator on $\hh$. One can easily see that for every state $0 \leq g(\rho) < 1$ and that it is equal to $0$ if and only if the state is pure. We are now ready to define precisely what we mean by purification of quantum trajectories due to the back-action of the measurement.
\begin{defi}[Pur]
We say that the measurement asymptotically purifies the trajectories if for every initial state $\rho $ the following holds true:
\begin{equation} \label{eq:puri} 
\lim_{n \rightarrow +\infty} g(\rho_n)=0 \quad \PP_\rho-a.s.
\end{equation}
\end{defi}

Notice that our definition is equivalent to the one in \cite{MK06} due to the fact that H{\"o}lder inequality ensures that $\tr(\rho^2)\leq \tr(\rho^m)^{1/(m-1)}$ for every $m \geq 2$.

If the Hilbert space $\hh$ is finite dimensional, the following equivalent characterization of (Pur) was found in \cite[Corollary 4]{MK06} (see \cite[Theorem 2.1]{BP03} for similar result in continuous time and \cite[Proposition 2.2]{BFPP19} for an alternative proof):

\begin{center} 
\textit{there does not exist any projection $p$ whose support has dimension bigger or equal than $2$ and such that for all $n \in \mathbb{N}^*$ and $i_1,\dots, i_n \in I$ there exists a nonnegative constant $\lambda_{i_1,\dots, i_n}$ satisfying}
\end{center}
\begin{equation} \label{eq:PI} pa^*_{i_n}\cdots a^*_{i_1}a_{i_1} \cdots a_{i_n}p=\lambda_{i_1,\dots, i_n}p.\end{equation}

The support of such a projection is usually referred to as a `dark' subspace, in that there is no information leaking into the ancillas. Indeed, any state supported on the support of $p$ produces the same outcome probability distribution and there is no way to discriminate two different states from the outcome sequence. Moreover, Eq. \eqref{eq:PI} means that every product of $a_i$'s acts as a multiple of an isometry when restricted to the support of $p$, therefore, if $\rho_0=p\rho_0p$, then $\rho_0$ and $\rho_n$ share the same eigenvalues for every time $n$ and there may be a change only in the eigenvectors: if $\rho_0$ is not a pure state, the same holds for $\rho_n$ at every time.

\section{Counterexample and main result} \label{sec:main}

In this section we exhibit a class of examples that shows that (Pur) fails to be equivalent to the condition in Eq. \eqref{eq:PI} for infinite dimensional systems. More precisely, we will show that for such class of models (Pur) fails, but there does not exist any `dark' subspace. After identifying the new phenomenon appearing in infinite dimensional systems and causing the failure of purification, we show that this is the only possibility outside of the ones already observed in the finite dimensional setting. 

\bigskip Let us consider the Hilbert space $\    \hh=\ell^2(\mathbb{N}^*)$ and let $(\ket{e_k})_{k \geq 1}$ be the orthonormal basis given by $\ket{e_k}(m):=\delta_{km}$. Consider the quantum trajectories associated to the following Kraus operators:
\[
a_1 = \sum_{k \geq 1} \sqrt{\alpha(k)} \ket{e_{k+1}}\bra{e_k}, \quad 
a_2 = \sum_{k \geq 1} \sqrt{1 - \alpha(k)} \ket{e_{k+1}}\bra{e_k}
\]
where the function $\alpha$ takes values in $[0,1]$.

\begin{prop} \label{prop:ce}
If
$$\forall k \in \mathbb{N}, \quad \alpha(k+1) > \alpha(k),
$$
then, for every initial state of the form $\rho=\gamma_0 \ket{e_1}\bra{e_1} + \delta_0 \ket{e_2}\bra{e_2}$, with $\gamma_0,\delta_0 \in (0,1)$ and $\gamma_0+\delta_0=1$, the following statement holds true:
$$\PP_\rho\left (\lim_{n \rightarrow +\infty}g(\rho_n)=0\right )=0.$$

Moreover, if $\alpha(n)=c-(n+z)^{-1}$ for some $c \in (0,1)$ and $z \in (c^{-1}-1,+\infty)$, then there exists no orthogonal projection $p$ with support of dimension at least $2$ such that for every $n \in \mathbb{N}^*$ and collection of indices $i_1,\dots, i_n \in \{1,2\}$, there exists a nonnegative number $\lambda_{i_1,\dots, i_n}$ satisfying
$$pa_{i_1}^* \cdots a_{i_n}^* a_{i_n} \cdots a_{i_1}p =\lambda_{i_1,\dots, i_n}p.$$
\end{prop}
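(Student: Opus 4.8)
The plan is to analyze the action of the Kraus operators explicitly. Both $a_1$ and $a_2$ are weighted shifts: $a_i$ maps the span of $\ket{e_1},\ket{e_2}$ into the span of $\ket{e_2},\ket{e_3}$, and more generally shifts the index up by one. Hence an initial state $\rho = \gamma_0\ket{e_1}\bra{e_1}+\delta_0\ket{e_2}\bra{e_2}$ stays, at time $n$, supported on $\mathrm{span}\{\ket{e_{n+1}},\ket{e_{n+2}}\}$ and is diagonal in that basis: $\rho_n = \gamma_n\ket{e_{n+1}}\bra{e_{n+1}} + \delta_n\ket{e_{n+2}}\bra{e_{n+2}}$ with $\gamma_n+\delta_n=1$. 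I would first establish this invariance by induction, noting that applying $a_i$ to $\ket{e_k}\bra{e_k}$ gives a multiple of $\ket{e_{k+1}}\bra{e_{k+1}}$ and there are no off-diagonal terms because $a_1\ket{e_1}$ and $a_1\ket{e_2}$ (resp.\ $a_2$) land on orthogonal vectors $\ket{e_2},\ket{e_3}$. Then $g(\rho_n) = 2\gamma_n\delta_n$, so (Pur) fails along this trajectory iff $\gamma_n\delta_n$ does not tend to $0$, i.e.\ iff $(\gamma_n)$ stays bounded away from both $0$ and $1$.

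Next I would write the recursion for $(\gamma_n)$. Conditional on $\rho_n$ being as above, outcome $i=1$ occurs with probability $\tr(\rho_n a_1^*a_1) = \gamma_n\alpha(n+1)+\delta_n\alpha(n+2)$ and produces $\gamma_{n+1} = \gamma_n\alpha(n+1)/(\gamma_n\alpha(n+1)+\delta_n\alpha(n+2))$; outcome $i=2$ occurs with the complementary probability and gives $\gamma_{n+1} = \gamma_n(1-\alpha(n+1))/(\gamma_n(1-\alpha(n+1))+\delta_n(1-\alpha(n+2)))$. The key structural fact is the monotonicity $\alpha(n+2)>\alpha(n+1)$: in the first branch $\gamma_{n+1}<\gamma_n$ and in the second $\gamma_{n+1}>\gamma_n$, but in both cases the change is \emph{small} and controlled, because $\alpha(n+1)$ and $\alpha(n+2)$ are close. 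I would make this quantitative: show there is a deterministic sequence $\varepsilon_n\to 0$ with $\sum_n \varepsilon_n<\infty$ (this is where the concrete choice, or merely a summability hypothesis on $\alpha(n+2)-\alpha(n+1)$, would enter if needed for the first part, though for the first part it may suffice to bound $|\log\gamma_{n+1}-\log\gamma_n|$ or $|\mathrm{logit}(\gamma_{n+1})-\mathrm{logit}(\gamma_n)|$ by something summable). From $\gamma_0,\delta_0\in(0,1)$, the total variation $\sum_n|\mathrm{logit}(\gamma_{n+1})-\mathrm{logit}(\gamma_n)|$ being finite (deterministically, independent of $\omega$) forces $(\gamma_n)$ to converge to a limit in $(0,1)$, hence $g(\rho_n)\to 2\gamma_\infty\delta_\infty>0$ almost surely, giving $\PP_\rho(\lim g(\rho_n)=0)=0$.

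For the second assertion, suppose $p$ is a projection with $\dim p\hh\ge 2$ satisfying \eqref{eq:PI} for all words. Taking $n=1$ with $i_1=1$ and $i_1=2$ and summing gives $p a_1^*a_1 p + p a_2^*a_2 p = p(a_1^*a_1+a_2^*a_2)p = p\mathbf 1 p = p$, so $\lambda_1 + \lambda_2 = 1$ where $\lambda_i$ is the constant for the one-letter word $i$. Since $a_1^*a_1 = \sum_k \alpha(k)\ket{e_k}\bra{e_k}$ is a diagonal operator with \emph{all distinct} eigenvalues $\alpha(k)$ (strict monotonicity), the condition $p a_1^*a_1 p = \lambda_1 p$ says that $a_1^*a_1$ acts as the scalar $\lambda_1$ on the range of $p$; but a self-adjoint diagonal operator with simple spectrum has no two-dimensional subspace on which it is scalar — any such subspace would be an eigenspace of $a_1^*a_1$, which are all one-dimensional. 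This already gives the contradiction. I would present it cleanly: $p a_1^* a_1 p = \lambda_1 p$ together with $0\le p\le\mathbf 1$ implies, for any unit $\ket\psi\in p\hh$, $\bra\psi a_1^*a_1\ket\psi = \lambda_1$ and $\|a_1^*a_1 p\ket\psi - \lambda_1 p\ket\psi\| $ controls the variance, forcing $\ket\psi$ to be an eigenvector of $a_1^*a_1$ with eigenvalue $\lambda_1$; two orthogonal such eigenvectors contradict simplicity of the spectrum.

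The main obstacle I anticipate is the quantitative control in the first part: one must show that the \emph{random} sequence $(\gamma_n)$ — which can move up or down depending on the outcomes — nevertheless has deterministically summable increments (in a suitable coordinate like the logit), so that it cannot drift to the boundary regardless of the sequence of measurement outcomes. The monotonicity hypothesis $\alpha(k+1)>\alpha(k)$ gives the \emph{direction} of each step per branch, and the specific asymptotics $\alpha(n)=c-(n+z)^{-1}$ — for which $\alpha(n+2)-\alpha(n+1) = O(n^{-2})$ is summable — should make the bound routine once set up; the condition $z>c^{-1}-1$ is precisely what keeps $\alpha(1)=c-(1+z)^{-1}>0$, i.e.\ keeps the operators genuinely nondegenerate. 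Everything else (the invariance of the two-dimensional diagonal form, the transition probabilities, and the spectral argument for the second part) is a short computation.
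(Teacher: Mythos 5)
Your plan for the first statement is essentially the paper's argument, and the ``main obstacle'' you anticipate is not an obstacle at all: the per-step change of $\gamma_n$ telescopes thanks to monotonicity alone. Indeed, on outcome $1$ one has $\gamma_{n+1}/\delta_{n+1}=\frac{\alpha(n+1)}{\alpha(n+2)}\,\gamma_n/\delta_n$ and on outcome $2$ one has $\gamma_{n+1}/\delta_{n+1}=\frac{1-\alpha(n+1)}{1-\alpha(n+2)}\,\gamma_n/\delta_n$, so in either case $\gamma_{n+1}\delta_{n+1}\geq \frac{\alpha(n+1)(1-\alpha(n+2))}{\alpha(n+2)(1-\alpha(n+1))}\gamma_n\delta_n$, and iterating gives the deterministic bound $\gamma_{n+1}\delta_{n+1}\geq \frac{\alpha(1)(1-\alpha(n+2))}{\alpha(n+2)(1-\alpha(1))}\gamma_0\delta_0$, bounded away from zero; no summability hypothesis beyond strict monotonicity (hence boundedness) of $\alpha$ is needed, and in particular the special form $\alpha(n)=c-(n+z)^{-1}$ plays no role in this part. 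This is exactly how the paper proceeds (it works with the product $\gamma_n\delta_n$ rather than the logit, but the telescoping is the same).

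The second part of your proposal contains a genuine error. The condition $p\,a_1^*a_1\,p=\lambda_1 p$ only says that the \emph{compression} of the diagonal operator $a_1^*a_1=\sum_k\alpha(k)\ket{e_k}\bra{e_k}$ to the range of $p$ is a scalar; it does \emph{not} imply that the range of $p$ is an eigenspace, nor that unit vectors in it are eigenvectors. A simple counterexample: take $v_2=\ket{e_2}$ and $v_1=\cos\theta\,\ket{e_1}+\sin\theta\,\ket{e_3}$ with $\cos^2\theta\,\alpha(1)+\sin^2\theta\,\alpha(3)=\alpha(2)$; then the projection $p$ onto ${\rm span}\{v_1,v_2\}$ satisfies $p\,a_1^*a_1\,p=\alpha(2)p$ although the spectrum of $a_1^*a_1$ is simple. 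Your ``variance'' variant fails for the same reason: $\bra{\psi}A\ket{\psi}=\lambda$ does not give $\bra{\psi}A^2\ket{\psi}=\lambda^2$, and $(a_1^*a_1)^2$ is not one of the operators $a_{i_1}^*\cdots a_{i_n}^*a_{i_n}\cdots a_{i_1}$ appearing in the hypothesis (the word $(1,1)$ yields multiplication by $\alpha(k)\alpha(k+1)$, not $\alpha(k)^2$), so the dark-subspace condition gives no control on the variance. This is precisely why the second statement carries the specific hypothesis $\alpha(n)=c-(n+z)^{-1}$: the paper must use the \emph{whole family} of word conditions, recasting them as the statement that the $\ell^1$ sequences $d(k)=|v(k)|^2-|w(k)|^2$ and $f(k)=\overline{v}(k)w(k)$ are annihilated by $\underline{1}$ and by all shifted products $\beta_{i_1}(k)\cdots\beta_{i_n}(k+n)$ — hence by $c\underline{1}-S^{*k}\alpha$, i.e.\ by the rows of the generalized Hilbert matrix $\left((k+m+z)^{-1}\right)_{k,m}$ — and then invoke the nontrivial fact that this matrix has trivial kernel. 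Your reading of $z>c^{-1}-1$ as ensuring $\alpha(1)>0$ is correct, but the specific form of $\alpha$ is essential to the absence of dark subspaces, not incidental, and your shortcut via simplicity of the spectrum of $a_1^*a_1$ would ``prove'' a stronger statement (absence of dark subspaces for every strictly increasing $\alpha$) by an argument that breaks at its first step.
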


The explanation of why any initial state of the form $\gamma_0 \ket{e_1}\bra{e_1} + \delta_0 \ket{e_2}\bra{e_2}$ does not purify, even if there are not any `dark' subspaces, is quite simple. Let us consider the sequence of orthogonal projections given by $$p_k:=\ket{e_k}\bra{e_k}+\ket{e_{k+1}}\bra{e_{k+1}}, \quad k \geq 1.$$
Notice that, for the class of initial states that we are considering, the support of $\rho_n$ is contained in the support of $p_n$. Thanks to the strict monotonicity of $\alpha$, any product of Kraus operators $a_{i_m} \cdots a_{i_1}$ does not act as a multiple of a partial isometry on the support of any $p_k$; however, this becomes true in the limit for $n \rightarrow +\infty$ due to the fact that $\alpha(k)\xrightarrow[k\rightarrow+\infty]{}c$. This is a phenomenon that can only appear in infinite dimensional systems; moreover, the following result shows that this is the only cause for purification to fail.

\begin{theo} \label{theo:main}
If (Pur) does not hold, then there exists a sequence of projections $\{p_n\}_{n\geq 0}$ such that
\begin{enumerate}
    \item the dimension of the support of each $p_n$ is at least $2$,
    \item for every $m \in \mathbb{N}^*$ and $i_1,\dots, i_m \in I$, there exists a sequence of positive numbers $\lambda_{i_1,\dots, i_m,n}$ satisfying
$$\lim_{n \rightarrow +\infty}\|p_n a_{i_1}^*\cdots a_{i_m}^*a_{i_m} \cdots a_{i_1}p_n-\lambda_{i_1,\dots, i_m,n}p_n\|_\infty=0.$$
\end{enumerate} 
\end{theo}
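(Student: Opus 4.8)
The plan is to work with the linear entropy $g(\rho_n)$, which by Definition (Pur) is a bounded process in $[0,1)$, and to exploit the Markov structure together with the infinite-dimensional version of Nielsen's inequalities to show that, if (Pur) fails, then along a set of trajectories of positive probability the support of $\rho_n$ must (asymptotically) behave like a dark subspace. First I would record the key one-step identity: under the transition $\rho_n \mapsto \rho_{n+1} = a_i\rho_n a_i^*/\tr(a_i\rho_n a_i^*)$, one computes $\tr(\rho_{n+1}^2)$ and hence how $g$ changes; summing over $i \in I$ weighted by $\pi_{i,n}=\tr(\rho_n a_i^* a_i)$ gives the conditional expectation $\mathbb{E}_\rho[g(\rho_{n+1}) \mid {\cal F}_n]$. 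The natural object is the Nielsen-type inequality: the vector of eigenvalues of $\rho_{n+1}$ is majorized (in the appropriate sense, extended to trace-class operators following \cite{Ni01}) by a convex combination related to the eigenvalues of $\rho_n$, with equality precisely when the relevant Kraus product acts as a multiple of a partial isometry on the support of $\rho_n$. This yields that $(g(\rho_n))_{n\geq 0}$ is a bounded submartingale (or, after a sign, that an associated quantity is a supermartingale), so it converges $\PP_\rho$-a.s. to a limit $g_\infty(\omega)$.

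Next, negating (Pur) means there is an initial state $\rho$ and an event of positive $\PP_\rho$-probability on which $g_\infty > 0$; since $g_\infty = \lim g(\rho_n)$ and the increments of the submartingale have nonnegative conditional expectation, a Borel--Cantelli / martingale-convergence argument forces the conditional expected increments $\mathbb{E}_\rho[g(\rho_{n+1}) - g(\rho_n)\mid {\cal F}_n]$ to tend to $0$ a.s. on this event. Unpacking the Nielsen inequality quantitatively, a small expected increment at step $n$ forces each $a_i$ (and, iterating, each finite product $a_{i_m}\cdots a_{i_1}$) to act \emph{almost} like a multiple of a partial isometry on the support of $\rho_n$ — precisely, $\|q_n a_{i_1}^*\cdots a_{i_m}^* a_{i_m}\cdots a_{i_1} q_n - \lambda_{i_1,\dots,i_m,n} q_n\|_\infty \to 0$ where $q_n$ is the support projection of $\rho_n$. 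To get honest projections of support dimension at least $2$, I would use that on the non-purifying event $g(\rho_n)$ stays bounded away from $0$ along the subsequence, so $\rho_n$ has at least two eigenvalues bounded below; replacing $q_n$ by the spectral projection $p_n$ of $\rho_n$ onto its two top eigenvalues (or a rank-$2$ sub-projection of $q_n$ chosen measurably) gives projections of dimension exactly $2$ for which the same asymptotic multiple-of-isometry relation survives, because compressing by $p_n \leq q_n$ only improves the estimate.

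The main obstacle I anticipate is the lack of sequential compactness of the set of orthogonal projections in infinite dimensions — this is exactly the point the introduction flags. Concretely: one would like to extract a limit projection $p$ from $\{p_n\}$ and recover the finite-dimensional dark-subspace conclusion \eqref{eq:PI}, but no such limit need exist, which is why the theorem only asserts the existence of the \emph{sequence} $\{p_n\}$ with the asymptotic (rather than exact) intertwining relation. So the real work is to carry all estimates through uniformly enough that the conclusion can be stated purely in terms of the sequence, without ever passing to a limit: I must control the constants $\lambda_{i_1,\dots,i_m,n}$ (e.g. identify them as $\|a_{i_m}\cdots a_{i_1} p_n\|_\infty^2$-type quantities, or as suitable compressions of $p_n a_{i_1}^*\cdots a_{i_m}^* a_{i_m}\cdots a_{i_1} p_n$), and verify that the $m$-step estimate degrades in a summable/controlled way as $m$ grows, using submultiplicativity of the operator norm and the uniform bound $\|a_i\|_\infty \leq 1$ coming from $\sum_i a_i^* a_i = \mathbf{1}$. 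A secondary technical point is making the choice of $p_n$ measurable in $\omega$ and justifying the extension of Nielsen's majorization inequalities to the trace-class setting; both are expected to be routine given \cite{Ni01}, but should be stated carefully.
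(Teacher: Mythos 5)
Your overall architecture does match the paper's: show that the linear entropy $g(\rho_n)$ is a bounded supermartingale, deduce a.s.\ convergence, extract times at which the conditional expected entropy decrements for all word lengths vanish, and convert their smallness into the asymptotic multiple-of-isometry relation for the spectral projection onto the two top eigenvalues. However, there is a genuine gap at the step where you claim that, since $g(\rho_n)$ stays bounded away from $0$ on the non-purifying event, ``$\rho_n$ has at least two eigenvalues bounded below''. In infinite dimensions this implication is false: $g(\rho_n)\geq c$ only yields $\mu_1(\rho_n)\leq\sqrt{1-c}<1$, and it does not prevent the spectral measure from flattening — e.g.\ eigenvalues $(1/2,\,1/(2M),\dots,1/(2M))$ with $M\to\infty$ keep $g$ near $3/4$ while $\mu_2\to 0$. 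A uniform lower bound $\mu_2(\rho_{n_j}(\omega))\geq a>0$ is exactly what is needed to pass from the vanishing of the decrement $\sum_{\underline i}\pi_{\underline i,n_j}\tr\big(\big(\sqrt{\rho_{n_j}}\,(a_{\underline i}^*a_{\underline i}/\pi_{\underline i,n_j}-\mathbf{1})\,\sqrt{\rho_{n_j}}\big)^2\big)$ to the operator-norm statement for the rank-two projection $p_{n_j}$ (the paper's final estimate is bounded below by $(\mu_j^2)^2\,\|p_j(a_{\underline i}^*a_{\underline i}-\pi_{\underline i,n_j}\mathbf{1})p_j\|_2^2$), and it is not free. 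The paper obtains it through Lemma \ref{lem:conc}: the infinite-dimensional Nielsen/Ky Fan inequality shows that $S^N_n=\sum_{k=1}^N\mu_k(\rho_n)$ is a submartingale, giving a concentration (tightness) bound on $S^N_\infty$, and the event $A$ is then chosen so that $S^N_\infty\geq b+a'$ there, which together with $\mu_1\leq b$ forces $\mu_2\geq a$. In your proposal Nielsen's inequality is invoked only for the supermartingale property of $g$ (which the paper instead proves by a direct computation showing $h^p_k\geq 0$); its essential role — ruling out the flattening of the spectrum — is missing, and without it the conclusion cannot be reached.

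Two secondary points. First, your claim that the estimate holds on the full support projection $q_n$ of $\rho_n$ and ``only improves'' under compression by $p_n\leq q_n$ is not right quantitatively: the decrement controls compressions only with weights given by the eigenvalues of $\rho_n$, so eigenvalues near zero in the support give no control; one must work directly with spectral projections whose eigenvalues are bounded below, which is again why the bound $\mu_2\geq a$ is the crux. Second, you do not treat outcome words $\underline i$ with $\pi_{\underline i,n_j}\to 0$, for which the normalized ``almost isometry'' expression is meaningless; the paper handles them separately (after a diagonal extraction making each $\pi_{\underline i,n_j}$ either vanish or stay bounded below) by taking $\lambda_{\underline i,n_j}=0$ and using once more $\mu_2\geq a$ to show $\|p_{n_j}a_{\underline i}^*a_{\underline i}p_{n_j}\|\to 0$. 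Finally, the obstacle you single out — non-compactness of the set of projections — is not where the difficulty of this theorem lies, since the statement never requires a limit projection; that issue only becomes relevant for Corollary \ref{coro:fd} and Proposition \ref{prop:sm}.
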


The proof employs the ideas and tools of \cite[Theorem 2.1]{BP03}, however facing some extra difficulties due to the infinite dimensional setting. For instance, one phenomenon that might take place in infinite dimensions and could prevent purification of quantum trajectories is the flattening of the spectral measure of the conditional state, i.e. the lack of tightness of the probability densities on $\mathbb{N}^*$ given by $(\mu_k(\rho_n))_{k \geq 1}$, where for every state $\rho$, we denote by $(\mu_k(\rho))_{k \geq 1}$ the sequence of its eigenvalues (with multiplicities) in non-decreasing order. However, the following Lemma contains some concentration inequalities that show that this scenario is not possible. The proof of the following Lemma is based on an extension of Eq. (28) in \cite{Ni01} to the infinite dimensional setting. We point out that Eq. (28) in \cite{Ni01} was also used in the proof of \cite[Corollary 5]{MK06} for different purposes.

\begin{lemma} \label{lem:conc}
For every $N \geq 1$, let us define
$S^N_n:=\sum_{k \geq 1}^{N}\mu_k(\rho_n)$; there exists a random variable $0 \leq S^N_\infty \leq 1$ such that
$$\lim_{n \rightarrow +\infty}S_n^N=S_\infty^N \quad \PP_\rho-\text{a.s.}.$$
Moreover, for every $\epsilon>0$, there exists $N \geq 1$ such that for every $n \geq 0$ and $\gamma \in (0,1)$
    $$\PP_\rho\left (S_\infty^N<\gamma\right )\leq \frac{\epsilon}{1-\gamma}.$$
\end{lemma}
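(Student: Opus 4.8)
The crux is an infinite‑dimensional version of Nielsen's inequality (\cite[Eq.~(28)]{Ni01}): if a perfect measurement with Kraus operators $\{b_i\}_{i\in I}$ (so $\sum_i b_i^*b_i=\mathbf 1$) sends a state $\sigma$ to the ensemble $\{p_i,\sigma_i\}$, $p_i\sigma_i=b_i\sigma b_i^*$, then for every $N\ge1$
$$\sum_{i\in I}p_i\,S^N(\sigma_i)\ \ge\ S^N(\sigma),\qquad S^N(\sigma):=\sum_{k=1}^{N}\mu_k(\sigma),$$
where $S^N(\sigma)$ is the sum of the $N$ largest eigenvalues of $\sigma$. My plan is to prove this inequality, deduce that $(S^N_n)_n$ is a bounded submartingale, and read off both claims. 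The main tool for the inequality is the Ky Fan variational formula, which holds verbatim for trace‑class operators on a separable Hilbert space: $S^N(\sigma)=\max\{\tr(\Pi\sigma):\ \Pi=\Pi^*=\Pi^2,\ \operatorname{rank}\Pi\le N\}$.

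To prove the inequality I would let $\Pi^*$ be the spectral projection of $\sigma$ onto its $N$ largest eigenvalues, so that $S^N(\sigma)=\tr(\Pi^*\sigma)$ and $\operatorname{rank}\Pi^*\le N$, and let $\Pi_i$ be the orthogonal projection onto the (at most $N$‑dimensional, hence closed) subspace $b_i(\operatorname{ran}\Pi^*)$. Then $\Pi_i b_i\Pi^*=b_i\Pi^*$, whence $b_i^*\Pi_i b_i\Pi^*=b_i^*b_i\Pi^*$, so the operator $M:=\sum_i b_i^*\Pi_i b_i$ — a strongly convergent monotone series with $0\le M\le\sum_i b_i^*b_i=\mathbf 1$ — satisfies $\Pi^*M\Pi^*=\sum_i\Pi^*b_i^*b_i\Pi^*=\Pi^*$. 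For a positive contraction this forces $M\xi=\xi$ for all $\xi\in\operatorname{ran}\Pi^*$, and therefore $M=\Pi^*+(\mathbf 1-\Pi^*)M(\mathbf 1-\Pi^*)\ge\Pi^*$. Consequently, choosing $\Pi_i$ as test projections in Ky Fan's formula for each $\sigma_i$,
$$\sum_i p_i\,S^N(\sigma_i)\ \ge\ \sum_i p_i\tr(\Pi_i\sigma_i)\ =\ \sum_i\tr\!\big(b_i^*\Pi_ib_i\,\sigma\big)\ =\ \tr(M\sigma)\ \ge\ \tr(\Pi^*\sigma)\ =\ S^N(\sigma),$$
the interchange of sum and trace being justified by positivity of all terms.

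Applying this with $\sigma=\rho_n$ and the transition kernel $\{\pi_{i,n},\rho_{n+1}^{(i)}\}_i$ of the quantum trajectory gives $\mathbb E_\rho[S^N_{n+1}\mid\mathcal F_n]=\sum_i\pi_{i,n}S^N(\rho_{n+1}^{(i)})\ge S^N(\rho_n)=S^N_n$. Since $\sigma\mapsto S^N(\sigma)$ is $\|\cdot\|_\infty$‑continuous (again by Ky Fan), $S^N_n$ is $\mathcal F_n$‑measurable and $[0,1]$‑valued, so $(S^N_n)_{n\ge0}$ is a bounded submartingale; Doob's convergence theorem yields $S^N_n\to S^N_\infty$ $\PP_\rho$‑a.s. with $0\le S^N_\infty\le1$, which is the first assertion, and bounded convergence together with the submartingale property gives $\mathbb E_\rho[S^N_\infty]=\lim_n\mathbb E_\rho[S^N_n]\ge\mathbb E_\rho[S^N_0]=S^N(\rho)$.

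For the concentration bound, fix $\epsilon>0$ and, since $\sum_{k\ge1}\mu_k(\rho)=1$, choose $N$ with $S^N(\rho)\ge1-\epsilon$. For any $[0,1]$‑valued random variable $X$ and $\gamma\in(0,1)$ one has $\mathbb E[X]\le\gamma\,\PP(X<\gamma)+\PP(X\ge\gamma)=1-(1-\gamma)\PP(X<\gamma)$; taking $X=S^N_\infty$ and using $\mathbb E_\rho[S^N_\infty]\ge S^N(\rho)\ge1-\epsilon$ gives $\PP_\rho(S^N_\infty<\gamma)\le\frac{1-S^N(\rho)}{1-\gamma}\le\frac{\epsilon}{1-\gamma}$, as claimed. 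The submartingale bookkeeping and this elementary Markov‑type estimate are routine; I expect the genuine obstacle to be the infinite‑dimensional Nielsen inequality, where one must check that Ky Fan's maximum principle and the top‑eigenvalue spectral projection are available for trace‑class operators, that the family $\{\Pi_i\}$ and the series defining $M$ behave well when $I$ is infinite, and that $M\ge\Pi^*$ indeed follows from $\Pi^*M\Pi^*=\Pi^*$ and $M\le\mathbf 1$.
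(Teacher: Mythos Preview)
Your proof is correct and follows essentially the same route as the paper: establish that $(S^N_n)_{n\ge0}$ is a bounded submartingale via the infinite-dimensional Nielsen inequality (the paper cites \cite{Li10} and Ky Fan's maximum principle for this, whereas you supply a direct proof), apply Doob's convergence theorem, and then use Markov's inequality on $1-S^N_\infty$ together with $\mathbb E_\rho[S^N_\infty]\ge S^N(\rho)$ and a choice of $N$ with $S^N(\rho)\ge1-\epsilon$. Your explicit argument for Nielsen's inequality is a nice self-contained addition, and the step $M\ge\Pi^*$ is in fact slightly easier than you present it: from $\Pi_i b_i\Pi^*=b_i\Pi^*$ one gets directly $M\Pi^*=\Pi^*$, hence (by self-adjointness) $M=\Pi^*+(\mathbf 1-\Pi^*)M(\mathbf 1-\Pi^*)\ge\Pi^*$, with no need to invoke $M\le\mathbf 1$.
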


\section{Bringing `dark' subspaces back} \label{sec:ds}

In the previous section we saw that the characterization of purification in terms of `dark' subspaces given in Eq. \eqref{eq:PI} may fail due to the fact that in infinite dimensional models the supports of $p_n$'s can escape to infinity. This intuition is helpful in determining models for which this cannot happen: for instance, the proof of Theorem 2.1 in \cite{BP03} and Theorem 1 \cite{MK06} relied on the fact that if the system is finite dimensional the set of projections is sequentially compact and one can show that any projection which is a limit point of $(p_n)_{n \geq 1}$ is the projection onto a `dark' subspace. In this section we detail how the finite dimensional result follows from Theorem \ref{theo:main}, we determine another class of models for which purification can be characterized in terms of `dark' subspaces and we exhibit several examples that belong to such class.

\begin{coro}[Finite dimensional case] \label{coro:fd}
Let us assume that $\hh$ is finite dimensional. The following statements are equivalent:
\begin{enumerate}
\item (Pur) does not hold,
\item there exists a projection $p$ whose support has dimension bigger or equal than $2$ and such that for all $n \in \mathbb{N}^*$ and $i_1,\dots, i_n \in I$
$$pa^*_{i_n}\cdots a^*_{i_1}a_{i_1} \cdots a_{i_n}p=\lambda_{i_1,\dots, i_n}p.$$
\end{enumerate}
\end{coro}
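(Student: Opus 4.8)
The plan is to derive Corollary \ref{coro:fd} from Theorem \ref{theo:main}, with finite dimensionality entering only through sequential compactness of the set of orthogonal projections.

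I would first dispose of the implication (2) $\Rightarrow$ (1), which does not need Theorem \ref{theo:main}. Given such a $p$, pick orthonormal $\ket{\psi_1},\ket{\psi_2}$ in $\mathrm{ran}\,p$ (possible since $\dim\mathrm{ran}\,p \geq 2$) and take $\rho := \gamma\ket{\psi_1}\bra{\psi_1} + (1-\gamma)\ket{\psi_2}\bra{\psi_2}$ with $\gamma \in (0,1)\setminus\{1/2\}$, so that $\rho = p\rho p$. On the full-$\PP_\rho$-measure set of $\omega$ for which every prefix $(\omega_1,\dots,\omega_n)$ has $\PP_\rho(\Lambda_{\omega_1,\dots,\omega_n}) > 0$ (the complement being a countable union of null cylinders), write $A := a_{\omega_n}\cdots a_{\omega_1}$; Eq. \eqref{eq:PI} gives $pA^*Ap = \lambda p$ with $\lambda > 0$ (if $\lambda = 0$ then $\tr(\rho A^*A)=\tr(\rho pA^*Ap)=0$, contradicting positive probability). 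Hence $W:=Ap/\sqrt{\lambda}$ satisfies $W^*W = p \geq \mathrm{supp}(\rho)$, i.e. $W$ is a partial isometry acting isometrically on $\mathrm{ran}\,p$, and $\rho_n(\omega) = W\rho W^*$ is isospectral to $\rho$, so $g(\rho_n) = 2\gamma(1-\gamma) \neq 0$ for all $n$. Thus $\PP_\rho(\lim_n g(\rho_n)=0)=0$ and (Pur) fails. (This is essentially the remark already made after Eq. \eqref{eq:PI}.)

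For (1) $\Rightarrow$ (2), I would invoke Theorem \ref{theo:main} to obtain projections $(p_n)_{n\geq 0}$ with $\dim\mathrm{ran}\,p_n \geq 2$ and, for each $m$ and $i_1,\dots,i_m$, positive scalars $\lambda_{i_1,\dots,i_m,n}$ with $\|p_n B_{i_1,\dots,i_m} p_n - \lambda_{i_1,\dots,i_m,n} p_n\|_\infty \to 0$, where $B_{i_1,\dots,i_m} := a_{i_1}^*\cdots a_{i_m}^* a_{i_m}\cdots a_{i_1}$. Since $\dim\hh < \infty$, the dimensions $\dim\mathrm{ran}\,p_n$ take finitely many values, so I can pass to a subsequence along which they equal a fixed $d \geq 2$; the $d$-Grassmannian of $\hh$ being compact, a further subsequence (kept unrelabeled) satisfies $p_n \to p$ in operator norm for some projection $p$ with $\dim\mathrm{ran}\,p = d$. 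Fixing $m, i_1,\dots,i_m$ and writing $B = B_{i_1,\dots,i_m}$, $\lambda_n = \lambda_{i_1,\dots,i_m,n}$, taking traces in the approximate relation yields $|\tr(p_n B p_n) - d\lambda_n| \leq d\|p_n B p_n - \lambda_n p_n\|_\infty \to 0$; since $p_n \to p$ in norm and $B$ is bounded, $\tr(p_n B p_n) \to \tr(pBp)$, so $\lambda_n \to \lambda := d^{-1}\tr(pBp) \geq 0$. Then $p_n B p_n \to pBp$ and $\lambda_n p_n \to \lambda p$ in operator norm, hence $\|pBp - \lambda p\|_\infty = 0$. As the subsequence was chosen once and for all tuples, $p$ satisfies $p\,a_{i_1}^*\cdots a_{i_m}^* a_{i_m}\cdots a_{i_1}\,p = \lambda_{i_1,\dots,i_m} p$ for every $m$ and $i_1,\dots,i_m$, which is (2).

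I do not expect a serious obstacle: the whole argument is the compactness step anticipated in the discussion at the start of Section \ref{sec:ds}. The only points needing a little care are trimming to a constant-dimension subsequence before using compactness, and checking that the scalars $\lambda_{i_1,\dots,i_m,n}$ converge — both handled above — so that the \emph{approximate} intertwining relations of Theorem \ref{theo:main} upgrade to \emph{exact} ones in the limit.
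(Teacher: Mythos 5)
Your proposal is correct and follows essentially the same route as the paper: the implication $2.\Rightarrow 1.$ via a mixed state supported in $p$, and $1.\Rightarrow 2.$ by applying Theorem \ref{theo:main} and using sequential compactness of the set of orthogonal projections in finite dimension to extract a limit projection satisfying Eq. \eqref{eq:PI}. You merely spell out details the paper leaves implicit (convergence of the scalars $\lambda_{i_1,\dots,i_m,n}$ and the passage from approximate to exact relations), which is a faithful elaboration rather than a different argument.
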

\begin{proof}
The implication $2. \Rightarrow 1.$ is trivial, since it is enough to take any mixed state $\rho_0$ supported in $p$.

\bigskip $1. \Rightarrow 2.$ Let us consider a sequence of orthogonal projections as in the statement of Theorem \ref{theo:main}; since the set of orthogonal projections is sequentially compact in finite dimensional Hilbert spaces, the sequence $\{p_{n}\}_{n \geq 0}$ admits at least an accumulation point $p$ which satisfies the statement in Eq. \eqref{eq:PI}.
\end{proof}

\begin{prop} \label{prop:sm}
Let us assume that there exists an increasing sequence of finite dimensional projections $(s_m)_{m \geq 1}$ such that
\begin{enumerate}
    \item \label{item:compl} $\bigvee_{m \geq 1}s_m=\mathbf{1},$
    \item for every $m \geq 0$ and for every initial state $\rho$, $(X_n^m:=\tr(\rho_n s_m))_{n \geq 0}$ is a $(({\cal F}_n)_{n \geq 0}, \PP_\rho)$-submartingale. 
\end{enumerate}

Then, the following statements are equivalent:
\begin{enumerate}
\item (Pur) does not hold,
\item there exists a projection $p$ whose support has dimension bigger or equal than $2$ and such that for all $n \in \mathbb{N}^*$ and $i_1,\dots, i_n \in I$ there exists a nonnegative constant $\lambda_{i_1,\dots,i_n}$ satisfying
$$pa^*_{i_n}\cdots a^*_{i_1}a_{i_1} \cdots a_{i_n}p=\lambda_{i_1,\dots, i_n}p.$$
\end{enumerate}
\end{prop}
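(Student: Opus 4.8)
The implication $2.\Rightarrow1.$ is again trivial, since any mixed state supported on $\mathrm{supp}(p)$ fails to purify. For $1.\Rightarrow2.$ the plan is to feed the hypothesis into Theorem \ref{theo:main}, obtaining a sequence $\{p_n\}_{n\ge0}$ with $\dim\mathrm{supp}(p_n)\ge2$ and $\|p_na^*_{i_1}\cdots a^*_{i_m}a_{i_m}\cdots a_{i_1}p_n-\lambda_{i_1,\dots,i_m,n}p_n\|_\infty\to0$ for every finite word, and then to use the submartingale hypothesis to prevent the supports of the $p_n$ from escaping to infinity, so that the compactness argument behind Corollary \ref{coro:fd} can be run after all.

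First I would extract from hypothesis 2.\ a tightness estimate in the spirit of Lemma \ref{lem:conc}. For each fixed $m$, the process $X_n^m=\tr(\rho_ns_m)$ is a submartingale bounded by $1$, so $\mathbb E_\rho[X_n^m]\ge X_0^m=\tr(\rho s_m)$ for all $n$, and $X_n^m$ converges $\PP_\rho$-a.s.\ and in $L^1$ to some $X_\infty^m$ with $\mathbb E_\rho[X_\infty^m]\ge\tr(\rho s_m)$. Since $s_m\uparrow\mathbf1$ strongly and $\rho$ is trace class, $\tr(\rho s_m)\to1$, and Markov's inequality then produces, for every $\epsilon>0$, an index $m$ for which $\PP_\rho(X_\infty^m<1-\epsilon)\le\epsilon$ and $\PP_\rho(X_n^m<\gamma)\le(1-\tr(\rho s_m))/(1-\gamma)$ uniformly in $n$. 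Informally: for $m$ large the conditional state keeps almost all of its mass inside the fixed finite-dimensional subspace $\mathrm{supp}(s_m)$, uniformly in time and along a set of trajectories of probability arbitrarily close to $1$.

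The heart of the matter is to transfer this confinement to the $p_n$, which forces re-examining the construction in the proof of Theorem \ref{theo:main}: there the $p_n$ can be taken to be the spectral projections of conditional states $\rho_{t_n}(\omega^\ast)$ onto eigenvalues $\ge\delta$, along a trajectory $\omega^\ast$ in a positive-probability event $A$ on which $\liminf_ng(\rho_n)\ge c>0$, with rank bounded by a fixed $N$ thanks to Lemma \ref{lem:conc}. Intersecting $A$ with $\{X_\infty^m\ge1-\epsilon\}$ for $m$ large keeps the probability positive, and along a trajectory in this intersection $\tr\big(\rho_{t_n}(\mathbf1-s_m)\big)\to0$; moreover $g(\rho_{t_n})\ge c$ gives $\mu_1(\rho_{t_n})\le\sqrt{1-c}<1$ while Lemma \ref{lem:conc} pins $\sum_{k\le N}\mu_k(\rho_{t_n})$ above $1-\eta$ with $\eta$ chosen $<1-\sqrt{1-c}$, so $\mu_2(\rho_{t_n})$ stays above a fixed $\delta>0$, which makes $\dim\mathrm{supp}(p_n)\ge2$ legitimate. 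Then $\rho_{t_n}\ge\delta p_n$, hence $\delta\,\tr\big(p_n(\mathbf1-s_m)\big)\le\tr\big(\rho_{t_n}(\mathbf1-s_m)\big)\to0$, and therefore $\|(\mathbf1-s_m)p_n\|_\infty\to0$. I expect this to be the main obstacle: making precise that the $p_n$ of Theorem \ref{theo:main} are governed by the conditional states, and that $\mu_2(\rho_{t_n})$ stays bounded away from $0$ on the non-purifying event, requires working inside that proof rather than quoting its statement.

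Once the supports of the $p_n$ are asymptotically contained in the fixed finite-dimensional space $\mathrm{supp}(s_m)$, the conclusion is routine and mirrors Corollary \ref{coro:fd}. Passing to a subsequence, the rank of $p_n$ becomes constant and $s_mp_ns_m$ converges in operator norm to an orthogonal projection $p$ of rank $\ge2$ supported in $\mathrm{supp}(s_m)$ (norm convergence of projections, together with $\|p_n-s_mp_ns_m\|_\infty\to0$, preserves the rank, so also $p_n\to p$ in norm). The coefficients $\lambda_{i_1,\dots,i_m,n}$ are bounded because $a_i^*a_i\le\mathbf1$, so a diagonal extraction over the countably many finite words yields limits $\lambda_{i_1,\dots,i_n}\ge0$, and passing to the limit in the approximate identity of Theorem \ref{theo:main} gives $pa^*_{i_n}\cdots a^*_{i_1}a_{i_1}\cdots a_{i_n}p=\lambda_{i_1,\dots,i_n}p$ for every $n$ and $i_1,\dots,i_n$, which is statement 2.
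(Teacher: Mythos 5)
Your strategy is the same as the paper's: the direction $2.\Rightarrow 1.$ is identical, and for $1.\Rightarrow 2.$ you correctly see that one has to work \emph{inside} the proof of Theorem \ref{theo:main} (the $p_n$ there are the spectral projections onto the two largest eigenvalues of the conditional states along a trajectory in the event $A$ of Lemma \ref{lem:A}, where $\mu_1\le b<1$ and $\mu_2\ge a>0$), that the submartingale hypothesis must be converted into confinement of these projections, and that the estimate $\delta\,\tr\bigl(p_n(\mathbf 1-s_m)\bigr)\le\tr\bigl(\rho_{t_n}(\mathbf 1-s_m)\bigr)$ is the bridge; this is exactly how the paper argues.

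The gap is in the quantifiers and, as a consequence, in the final compactness step. On $A\cap\{X^m_\infty\ge 1-\epsilon\}$ you only get $\lim_n\tr\bigl(\rho_{t_n}(\mathbf 1-s_m)\bigr)=1-X^m_\infty\le\epsilon$, not convergence to $0$; hence $\|(\mathbf 1-s_m)p_n\|_\infty$ and $\|p_n-s_mp_ns_m\|_\infty$ are only eventually of order $\sqrt{\epsilon/\delta}$, and your claims ``$\tr(\rho_{t_n}(\mathbf 1-s_m))\to0$'' and ``$\|p_n-s_mp_ns_m\|_\infty\to0$'' are false for a fixed $m$. With a single $(\epsilon,m)$ the norm limit points of $s_mp_ns_m$ are only \emph{approximate} projections satisfying the dark relation up to an $O(\sqrt{\epsilon})$ error, and sending $\epsilon\to0$ afterwards produces near-dark projections supported in $\mathrm{supp}(s_{m(\epsilon)})$ with $m(\epsilon)\to\infty$, so finite-dimensional compactness no longer closes the argument. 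What is needed is tightness at \emph{every} level along the same trajectory and subsequence: the paper gets it from the a.s.\ event $B=\{\forall\delta\;\exists M_\delta,N_\delta:\ X^{M_\delta}_n\ge1-\delta\ \forall n\ge N_\delta\}$ (using that $X^m_\infty\uparrow1$, which relies on the submartingale property together with $\tr(\rho s_m)\to1$), and then, instead of compactness inside one fixed $\mathrm{supp}(s_m)$, applies Banach--Alaoglu (trace class as the dual of the compacts) to the rank-one eigenprojections $q^1_j,q^2_j$ and upgrades ${\rm w}^*$-convergence to trace-norm convergence via this uniform tightness, so the limits are exact orthogonal rank-one projections and $p=q^1+q^2$ is an exact rank-two dark projection. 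Your argument is repairable along these lines (e.g.\ intersect $A$ with $\bigcap_k\{X^{m_k}_\infty\ge1-\epsilon_k\}$ for a summable choice of $\epsilon_k$, or use $B$ directly), but as written the decisive limiting step fails.
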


$\bigvee_{m \geq 1}s_m$ denotes the smallest orthogonal projection $s$ such that $s \geq s_m$ for every $m \geq 1$. Item 2. in the hypotheses of Proposition \ref{prop:sm} might look quite abstract and hard to check; however, there is a simple criterion on Kraus operators that ensures it holds. Let us consider a increasing sequence of finite dimensional projections $(s_m)_{m \geq 1}$ satisfying item 1. in Proposition \ref{prop:sm} and assume that Kraus operators satisfy
\begin{equation} \label{eq:Kcond}a_i s_m=s_ma_is_m, \quad \forall m \geq 1,\, \forall i \in I.\end{equation}
This means that Kraus operators are upper triangular in the decomposition of the Hilbert space induced by $(s_m)_{m \geq 1}$, i.e. if we consider $\hh=\bigoplus_{m \geq 1}\hh_m$ where $\hh_m$ is the range of $s_m -s_{m-1}$ ($s_0:=0$), then the block structure of Kraus operators is of the form
\[
\left(\begin{array}{c|c|c}
   \mbox{\normalfont\Large\bfseries *}
  &  \mbox{\normalfont\Large\bfseries *} & \mbox{\normalfont\Large\bfseries ...}\\
\hline
  \mbox{\normalfont\Large\bfseries 0}
  & \mbox{\normalfont\Large\bfseries *} & \mbox{\normalfont\Large\bfseries ...}\\
  \hline
  \mbox{\normalfont\Large\bfseries ...}
  & \mbox{\normalfont\Large\bfseries ...} & \mbox{\normalfont\Large\bfseries ...}
\end{array}\right).
\]

In this case, one can easily check that assumption 2. in Proposition \ref{prop:sm} is satisfied: indeed, for every initial state $\rho$ one has
\[\begin{split}\mathbb{E}_\rho[X_{n+1}^{m}-X_{n}^m|{\cal F}_n]&=\sum_{i \in I}\tr\left (\frac{a_i \rho_n a_i^*}{\tr(a_i \rho_n a_i^*)}s_m\right )\tr(a_i \rho_n a_i^*)-\tr(\rho_n s_m)  \\
&=\tr\left (\rho_n\left (\sum_{i \in I}a_i^* s_m a_i -s_m\right ) \right ) \geq 0, \end{split}\]
where the last inequality follows from the fact that condition in Eq. \eqref{eq:Kcond} is equivalent to $ \sum_{i \in I}a_i^* s_m a_i \geq s_m$ (see Section 3 in \cite{CP16}).

\section{Conclusions}
The goal of this work is to take a first step in the understanding of purification of quantum trajectories in infinite dimensional systems. The main contributions of this investigation are showing that the characterization of purification in terms of `dark' subspaces ceases to hold true in infinite dimension and identifying the only possible new phenomenon appearing in infinite dimensional systems that can prevent purification to happen, which is a sequence of subspaces connected by the dynamics and which become closer and closer to a `dark subspace. An interesting open problem is whether this is also a sufficient condition (possibly adding some further conditions). Finally, we proved that the characterization in terms of `dark' subspaces keeps being true for a class of models even in infinite dimensional systems; we believe that this result can be improved, proving it for a wider set of physical systems.
\section{Acknowledgments}
The authors are grateful to Franco Fagnola for suggesting the topic for A.V.'s master thesis and for many useful discussions. F.G. is a member of GNAMPA-INdAM and  acknowledges the support of the MUR grant ``Dipartimento di Eccellenza 2023--2027'' of Dipartimento di Matematica, Politecnico di Milano and ``Centro Nazionale di ricerca in HPC, Big Data and Quantum Computing''. 
\appendix

\section{Counterexample and main result}

\begin{proof}[Proof of Proposition \ref{prop:ce}]

\textbf{Failure of purification.} Consider the state at time $n$: 

$$\rho_n=\gamma_n \ket{e_{n+1}}\bra{e_{n+1}}+\delta_{n}\ket{e_{n+2}}\bra{e_{n+2}}.$$
Let us assume that at time $n+1$ one observe the outcome $1$, then
\begin{equation}\label{gamma1}\gamma_{n+1} \leq \gamma_n \text{ and } \delta_{n+1} \geq \delta_n.\end{equation}

Indeed, we recall that in this case one has
$$\rho_{n+1}=\frac{\alpha(n+1)\gamma_{n}\ket{e_{n+2}}\bra{e_{n+2}}+\alpha(n+2)\delta_n\ket{e_{n+3}}\bra{e_{n+3}}}{\alpha(n+1)\gamma_{n}+\alpha(n+2)\delta_n}.$$

Thus,

\[\begin{split}
\gamma_{n+1}-\gamma_n&=\frac{\alpha(n+1)\gamma_{n}}{\alpha(n+1)\gamma_{n}+\alpha(n+2)\delta_n}-\gamma_n =\frac{\alpha(n+1)(\gamma_n-\gamma_n^2)-\alpha(n+2)\delta_n\gamma_n}{\alpha(n+1)\gamma_{n}+\alpha(n+2)\delta_n}
\\
&=\frac{\alpha(n+1)\gamma_n\delta_n-\alpha(n+2)\delta_n\gamma_n}{\alpha(n+1)\gamma_{n}+\alpha(n+2)\delta_n}\\
&=\frac{(\alpha(n+1)-\alpha(n+2))\gamma_n\delta_n}{\alpha(n+1)\gamma_{n}+\alpha(n+2)\delta_n} \leq 0,
\end{split}\]
which proves the inequalities in \eqref{gamma1} (the statement for $\delta_n$ and $\delta_{n+1}$ follows from the fact that $\delta_n=1-\gamma_n$ and $\delta_{n+1}=1-\gamma_{n+1}$).

On the other hand, if at time $n+1$ one observes the outcome $2$, then the reverse inequalities holds:
\begin{equation}\label{delta1}
    \gamma_{n+1} \geq \gamma_n \text{ and } \delta_{n+1} \leq \delta_n.
\end{equation}

Indeed,

$$\rho_{n+1}=\frac{\left(1-\alpha(n+1)\right)\gamma_{n}\ket{e_{n+2}}\bra{e_{n+2}}+\left(1-\alpha(n+2)\right)\delta_n\ket{e_{n+3}}\bra{e_{n+3}}}{\left(1-\alpha(n+1)\right)\gamma_{n}+\left(1-\alpha(n+2)\right)\delta_n}$$

and

\[\begin{split}
 \gamma_{n+1}-\gamma_n&=\frac{\left(1-\alpha(n+1)\right)\gamma_n}{\left(1-\alpha(n+1)\right)\gamma_{n}+\left(1-\alpha(n+2)\right)\delta_n}-\gamma_n\\
 &=
\frac{(1-\alpha(n+1))\gamma_n\delta_n -(1-\alpha(n+2))\gamma_n\delta_n}{\left(1-\alpha(n+1)\right)\gamma_{n}+\left(1-\alpha(n+2)\right)\delta_n}\\
&=\frac{(\alpha(n+2)-\alpha(n+1))\gamma_n\delta_n }{\left(1-\alpha(n+1)\right)\gamma_{n}+\left(1-\alpha(n+2)\right)\delta_n} \geq 0.
\end{split}\]

Therefore this proves the inequalities in \eqref{delta1}.

Consider any measurement trajectory $\omega \in \Omega$; we will denote by $\omega_{n]}$ its restriction to the first $n$-outcomes. Notice that

\begin{equation}\label{ineq1}
\begin{split}
&\gamma_{n+1}(\omega) \delta_{n+1}(\omega)=\gamma_{n+1}(\omega_{n+1]})\delta_{n+1}(\omega_{n+1}])\geq \gamma_{n+1}(\omega_{n]},1)\delta_{n+1}(\omega_{n]},2)= \\ &=\frac{\alpha(n+1)\gamma_n(\omega)}{\alpha(n+1)\gamma_n(\omega)+\alpha(n+2)\delta_n(\omega)}\cdot \frac{(1-\alpha(n+2))\delta_n(\omega)}{(1-\alpha(n+1))\gamma_n(\omega)+(1-\alpha(n+2))\delta_n(\omega)}.  
\end{split}\end{equation}

Now consider the denominator in \eqref{ineq1}:\\
\[\begin{split}
&(\alpha(n+1)\gamma_n(\omega)+\alpha(n+2)\delta_n(\omega))\cdot ((1-\alpha(n+1))\gamma_n(\omega)+(1-\alpha(n+2))\delta_n(\omega)) \leq \\
&\alpha(n+2) (1-\alpha(n+1)).
\end{split}\]

Therefore,
$$\gamma_{n+1}(\omega)\delta_{n+1}(\omega)\geq \frac{\alpha(n+1)(1-\alpha(n+2))}{\alpha(n+2)(1-\alpha(n+1))}\gamma_n(\omega)\delta_n(\omega).$$

Iterating this inequality yields:
$$\gamma_{n+1}(\omega_{n+1]})\delta_{n+1}(\omega_{n+1]})\geq \frac{\alpha(1) (1-\alpha(n+2))}{\alpha(n+2) (1-\alpha(1))}\gamma_0\delta_0
$$
and we get that for every $\omega \in \Omega$ 
\begin{equation}
\liminf_{n \rightarrow +\infty}\gamma_{n+1}(\omega)\delta_{n+1}(\omega) \geq \frac{\alpha(1) (1-c)}{c (1-\alpha(1))}\gamma_0 \delta_0 >0.    
\end{equation}
The statement follows from the fact that $$g(\rho_n)=1-\gamma_n^2-\delta_n^2=2\gamma_n\delta_n.$$

\bigskip \noindent \textbf{Absence of `dark' subspaces.} We will prove the statement by contradiction. Let us assume that such a projection exists; then, there exist two orthonormal vectors $v,w$ that satisfy the following conditions: for every $n \in \mathbb{N}^*$ and collection of indices $i_1,\dots, i_n \in \{1,2\}$ 
\begin{equation}\label{eq:isocond} \langle v, a_{i_1}^*\cdots a_{i_n}^* a_{i_n} \cdots a_{i_1}v \rangle=\langle w, a_{i_1}^*\cdots a_{i_n}^* a_{i_n} \cdots a_{i_1}w \rangle, \quad  \langle v, a_{i_1}^*\cdots a_{i_n}^* a_{i_n} \cdots a_{i_1}w \rangle=0.
\end{equation}
Let us rephrase the conditions in Eq. \eqref{eq:isocond} in a more manageable way. Before doing that, we need to introduce some more notation; first of all, let us consider the following family of spaces of complex valued sequences:
$$\ell^p(\mathbb{N}^*):=\left \{(f(n))_{n \geq 1}:\sum_{n \geq 1}|f(n)|^p<+\infty\right \}.$$
We denote by $S$ the right shift operator, i.e. $S$ maps the sequence $(f(n))_{n \geq 1} \in \ell^p(\mathbb{N}^*)$ into the sequence
$$\ell^p(\mathbb{N}^*) \ni Sf(n)=\begin{cases} 0 \text{ if }n=1\\
f(n-1) \text{ if } n \geq 2 \end{cases} $$
for every $p \in [1,+\infty]$. Moreover, we denote by $S^*$ the left shift, which maps the sequence $(f(n))_{n \geq 1} \in \ell^p(\mathbb{N}^*)$ into the sequence $(f(n+1))_{n \geq 1} \in \ell^p(\mathbb{N}^*)$ (we chose the notation $S^*$ to recall that $S^*$ is the adjoint of $S$). Given any bounded function $g:\mathbb{N}^* \rightarrow \mathbb{R}$, $g(N)$ denotes the operator that maps $(f(n))_{n \geq 1} \in \ell^p(\mathbb{N}^*)$ into the sequence $(g(n)\cdot f(n))_{n \geq 1}\in\ell^p(\mathbb{N}^*)$ for every $p \in [1,+\infty]$; we remark that $g(N)S=Sg(N+1)$, $g(N)^*=g(N)$ (where, again, $^*$ denotes the adjoint) and $g(N)h(N)=h(N)g(N)$ for any other $h:\mathbb{N}^* \rightarrow \mathbb{R}$. Notice that
$$a_1=S \sqrt{\alpha(N)}, \quad a_2=S \sqrt{1-\alpha(N)},$$
therefore for every $n \in \mathbb{N}^*$, $i_1,\dots, i_n \in \{1,2\}$ one has
$$a_{i_1}^*\cdots a_{i_n}^* a_{i_n}\cdots a_{i_1}=\sqrt{\beta_{i_1}(N)}S^*\cdots \sqrt{\beta_{i_n}(N)}S^* S \sqrt{\beta_{i_n}(N)}\cdots S \sqrt{\beta_{i_1}(N)}=\beta_{i_1}(N)\cdots \beta_{i_n}(N+n),$$
where $\beta_1(n):=\alpha(n)$ and $\beta_2(n):=1-\alpha(n)$. Let us denote $\beta_{i_1,\dots, i_n}:=(\beta_{i_1,\dots, i_n}(k):=\beta_{i_1}(k)\cdots \beta_{i_n}(k+n))_{k \geq 1} \in \ell^{\infty}(\mathbb{N}^*)$, then conditions in Eq. \eqref{eq:isocond} can be rephrased as
$$\sum_{k \geq 1} \beta_{i_1,\dots, i_n}(k)\cdot (|v(k)|^2-|w(k)|^2)=0, \quad \sum_{k \geq 1} \beta_{i_1,\dots, i_n}(k)\cdot \overline{v}(k)\cdot w(k)=0.$$
Therefore $d:=(d(k):=(|v(k)|^2-|w(k)|^2))_{k \geq 1}$ and $f:=(f(k):=\overline{v}(k)\cdot w(k))_{k \geq 1}$ are vectors in $\ell^1(\mathbb{N}^*)$ which are annihilated by all continuous linear functionals in
$${\cal H}:={\rm span}_{\mathbb{C}}\{\underline{1},\beta_{i_1,\dots, i_n}, \, i_1, \dots, i_n \in \{1,2\}, \, n \in \mathbb{N}^*\} \subseteq \ell^{\infty}(\mathbb{N}^*),$$
where $\underline{1}$ denotes the sequence identically equal to $1$.

If we show that the only element in $\ell^1(\mathbb{N}^*)$ which is in the kernel of all the elements in ${\cal H}$ is zero, we are done: indeed, this would imply that
$$d=f=0 \text{ or equivalently } v=w,$$
which is a contradiction because we assumed $v$ and $w$ to be orthonormal. First of all, notice that $\underline{1}$ and $ S^{*k} \alpha$ for every $k \geq 1$ belong to ${\cal H}$, where $\alpha:=(\alpha(n))_{n \geq 1}.$ Since $\ell^1(\mathbb{N}^*)\subseteq \ell^2(\mathbb{N}^*)$, any nontrivial vector $x \in \ell^1(\mathbb{N}^*)$ that is in the kernel of $\underline{1}$ and $ S^{*k} \alpha$ for every $k \geq 1$ is also in the kernel of the operator $A:\ell^2(\mathbb{N}^*) \rightarrow \ell^2(\mathbb{N}^*)$ determined by
$$\bra{k}A\ket{m}=(k+m+z)^{-1}, \quad k,m \geq 1.$$
Indeed, if we see $A$ as an infinite matrix, its $k$-th row is given by $c\underline{1}-S^{*k}\alpha$. However, Corollary 7.4 in \cite{Si21} ensures that $A$ has trivial kernel and we are done.
\end{proof}

\begin{proof}[Proof of Lemma \ref{lem:conc}.]
    First of all, an extension of Equation (28) in \cite{Ni01} to the infinite dimensional setting shows that for every $N\geq 1$, $S^N_n:=\sum_{k=1}^{N}\mu_k(\rho_n)$ is a $(({\cal F}_n)_{n \geq 0},\PP_\rho)$-submartingale. The proof is based on Ky Fan's maximum principle (Lemma II.15 in \cite{DPTG16}) and can be found in \cite{Li10} (see Proposition 8.14).

    Therefore, Doob's convergence theorem implies tha there exists a random variable $S_\infty^N$ such that
    $$\lim_{n \rightarrow +\infty}S_n^N=S_\infty^N \quad \PP_\rho -\text{a.s.}.$$
    
   We set $S_\infty^N:=0$ when the limits does not exist. Moreover, using the submartingale property and Markov inequality one has

    \[\begin{split}\PP_\rho\left (S_\infty^N <\gamma\right )&=\PP_\rho\left (1-S_\infty^N >1-\gamma\right ) \leq \frac{1-\mathbb{E}_\rho\left [ S_\infty^N\right ]}{1-\gamma} \\
    &\leq \frac{1-\mathbb{E}_\rho[S_0^N]}{1-\gamma}=\frac{1-\sum_{k=1}^{N}\mu_k(\rho)}{1-\gamma}.\end{split}\]

    Since $\sum_{k=1}^{+\infty}\mu_k(\rho)=1$, for every $\epsilon >0$, there exists $N\geq 1$ such that
    $$1-\sum_{k=1}^{N}\mu_k(\rho)\leq \epsilon$$
    and we are done.
    \end{proof}

\begin{proof}[Proof of Theorem \ref{theo:main}]

    First of all, we need to prove a technical lemma.
    
   \begin{lemma} \label{lem:A} If there exists a state $\rho$ such that
$$\PP_\rho\left (\lim_{n \rightarrow +\infty}g(\rho_n)=0\right )<1,$$
then there exists an event $A \in {\cal F}$ such that
\begin{enumerate}
\item $\PP_\rho(A)>0$,
\item there exists a constant $0 <c \leq 1$ such that for every $\omega \in A$ the following limit exists and it satisfies:
$$\lim_{n \rightarrow +\infty}g(\rho_n(\omega))\geq c ,$$
\item there exist two constants $a$ and $b$ such that for every $\omega \in A$ one has 
$$0 < a \leq \liminf_{n \rightarrow +\infty}\mu_2(\rho_n(\omega)) \leq \limsup_{n\rightarrow +\infty}\mu_1(\rho_n(\omega)) \leq b <1,$$
\item there exists an increasing sequence of natural numbers $(n_j)_{j\geq 1}$ such that for every every $\omega \in A$ one has
$$\lim_{j\rightarrow +\infty}\sum_{\underline{i} \in I^p}\pi_{\underline{i},n_j}(\omega)\tr \left ( \left (\sqrt{\rho_{n_j}(\omega)}\left (\frac{a_{\underline{i}}^*a_{\underline{i}}}{\pi_{\underline{i},n_j}(\omega)}-\mathbf{1}\right ) \sqrt{\rho_{n_j}(\omega)}\right )^2 \right )=0, \quad \forall p \geq 1.$$
\end{enumerate}
\end{lemma}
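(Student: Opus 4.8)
The plan is to build the event $A$ in stages, starting from the failure of purification and successively refining to a positive-probability event on which all four properties hold simultaneously. First I would exploit the hypothesis $\PP_\rho(\lim_n g(\rho_n)=0)<1$. Since $g(\rho_n)\in[0,1)$, the $\limsup$ of $g(\rho_n)$ is a bounded random variable, and on a set $B_0$ of positive probability it is bounded away from $0$; passing to a slightly smaller set and invoking that $g$ is bounded, I can find a rational $c>0$ and a positive-probability event on which $\limsup_n g(\rho_n)\ge c$. To upgrade the $\limsup$ to an honest limit along the whole sequence (property 2), the natural tool is Lemma \ref{lem:conc}: the partial sums $S_n^N=\sum_{k=1}^N\mu_k(\rho_n)$ converge $\PP_\rho$-a.s. to limits $S_\infty^N$, and since $g(\rho_n)=1-\tr(\rho_n^2)=1-\sum_k\mu_k(\rho_n)^2$ can be controlled by finitely many eigenvalues up to an error that is uniformly small once $N$ is large (again by Lemma \ref{lem:conc}, the concentration estimate forces $S_\infty^N$ close to $1$ off a small-probability set), one deduces that $g(\rho_n)$ itself converges a.s. on a large event. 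Intersecting with $B_0$ gives an event $A_1$ with $\PP_\rho(A_1)>0$ on which $\lim_n g(\rho_n)$ exists and is $\ge c$.

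Next I would establish property 3, the two-sided control of the top two eigenvalues. The lower bound $\mu_2(\rho_n)\ge a>0$ eventually: since $g(\rho_n)=\sum_{k\ge 1}\mu_k(\rho_n)(1-\mu_k(\rho_n))$ and this is $\ge c$ in the limit, while the tail $\sum_{k>N}\mu_k(\rho_n)\le 1-S_n^N\to 1-S_\infty^N$ is uniformly small on a large event (Lemma \ref{lem:conc}), the bulk of the "mixedness" must sit in the first few eigenvalues; in particular $\mu_1(\rho_n)$ cannot approach $1$, so $\mu_1(\rho_n)\le b<1$ eventually, and since $\mu_1(1-\mu_1)+\mu_2(1-\mu_2)+\dots\ge c$ with a uniformly small tail, $\mu_2(\rho_n)$ is bounded below by some $a>0$ along a further subsequence — and by monotonicity-type arguments or by passing to a subsequence, on an event $A_2\subseteq A_1$ of positive probability. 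Here one should be slightly careful: the bounds $a,b$ may only hold along a subsequence or eventually, so I would phrase $A_2$ as the event where $\liminf_n\mu_2(\rho_n)\ge a$ and $\limsup_n\mu_1(\rho_n)\le b$, and argue this has positive probability by a pigeonhole/countable-union argument over rational $a,b$ combined with the a.s. convergence of the $S^N_\infty$.

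Finally, property 4 is the analytic heart and, I expect, the main obstacle. The quantity inside the limit is, up to normalization, $\sum_{\underline i\in I^p}\pi_{\underline i,n}\,\|\sqrt{\rho_n}(a_{\underline i}^*a_{\underline i}/\pi_{\underline i,n}-\mathbf 1)\sqrt{\rho_n}\|_2^2$, which is precisely the kind of "expected variation of the state under one more block of $p$ measurements" that appears in \cite[Theorem 2.1]{BP03}; the point is that its expectation telescopes. Concretely, one shows that $g(\rho_n)$ (or a related functional such as $\tr(\rho_n^2)$, which is a bounded submartingale — increasing in expectation under measurement) has increments whose conditional expectation equals, up to constants, this very sum. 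Since $\mathbb E_\rho[\tr(\rho_{n+p}^2)-\tr(\rho_n^2)]$ is the tail of a convergent series (the $\tr(\rho_n^2)$ form a bounded submartingale by the purification-preserving structure of perfect measurements), the sums are summable in expectation, hence converge to $0$ in $L^1$, hence along a subsequence $(n_j)$ converge to $0$ $\PP_\rho$-a.s. for each fixed $p$; a diagonal extraction over $p\in\mathbb N^*$ gives a single subsequence working for all $p$ simultaneously. Intersecting with $A_2$ and noting we only removed a null set yields the desired $A$. The delicate points I anticipate are: (i) justifying the infinite-dimensional algebraic identity relating the increment of $\tr(\rho_n^2)$ to the displayed sum, where trace-class and strong-convergence subtleties of $\sum_i a_i^*a_i=\mathbf 1$ must be handled (here the finiteness of $\tr(\rho_n^2)$ and Cauchy--Schwarz in Hilbert--Schmidt norm are the key levers); (ii) making sure the successive intersections $A\subseteq A_2\subseteq A_1\subseteq B_0$ retain positive probability, which is fine because every refinement removed only a $\PP_\rho$-null set except the first, which had positive probability by assumption; and (iii) the diagonal argument producing one subsequence $(n_j)$ valid for all $p$, which is routine once the $L^1\to$ a.s. extraction is in place for each $p$.
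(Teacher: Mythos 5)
Your plan is correct and matches the paper's proof in all essentials: the displayed quantity in item 4 is exactly the conditional $p$-step decrement $h^p_k=\mathbb{E}_\rho[g(\rho_{k-1})-g(\rho_{k-1+p})\mid\mathcal{F}_{k-1}]$ of the linear entropy, whose expectations telescope and tend to $0$, yielding a.s.\ convergence to $0$ along a diagonal subsequence, while items 2--3 follow from the failure of purification combined with the concentration bounds of Lemma \ref{lem:conc}, just as in the paper. The only (harmless) deviation is that you obtain the existence of $\lim_n g(\rho_n)$ from eigenvalue convergence plus tightness via Lemma \ref{lem:conc}, whereas the paper gets it directly from Doob's theorem applied to the supermartingale $g(\rho_n)$, a fact already implicit in the nonnegativity of your conditional increments.
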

\begin{proof}
Let $\rho$ be a state such that trajectories do not purify $\PP_\rho$ a.s.. Let us consider the stochastic process given by the linear entropy and its mean, i.e.
    $$g_n:=g(\rho_n), \quad G_n:=\mathbb{E}_\rho[g_n].$$

    For every $p \geq 1$ and $n\geq 1$, we can write
    \begin{equation} \label{eq:deco}g_{n}=\sum_{k=0}^{p-1}g_k-\sum_{k=1}^{p-1}g_{n-k}-\sum_{k=1}^{n-p+1}h^p_k+m^p_n,\end{equation}
    where
    $$m_n^p:=\sum_{k=p}^{n}(g_k-\mathbb{E}_\rho[g_k|{\cal F}_{k-p}])$$
    and
    $$h^p_k=\mathbb{E}_\rho[g_{k-1}-g_{k-1+p}|{\cal F}_{k-1}].$$

    Notice that $m_n^p$ is a centered random variable and it is a $(({\cal F}_n)_{n \geq 0}, \PP_\rho)$-martingale for $p=1$. Moreover, $h_k^p$ is ${\cal F}_{k-1}$-measurable, therefore the decomposition for $p=1$ is the Doob decomposition.
    
    If we show that $h_k^p \geq 0$ almost surelya, we prove that $g_n$ is a supermartingale. Before doing that, let us introduce some notation: given any sequence of outcomes $\underline{i}=(i_1,\dots, i_p) \in I^p$, let us introduce the following notation:
    $$a_{\underline{i}}^*a_{\underline{i}}:=a_{i_1}^* \cdots a_{i_p}^* a_{i_p} \cdots a_{i_1}, \quad \pi_{\underline{i},k}:=\tr(\rho_{k}a_{\underline{i}}^*a_{\underline{i}}), \quad \rho_{\underline{i},k}=\frac{a_{\underline{i}} \rho_{k}a_{\underline{i}}^*}{\tr(\rho_{k}a_{\underline{i}}^*a_{\underline{i}})}.$$

    One has
    \[\begin{split}
        h_k^p&=\sum_{\underline{i} \in I^p}\pi_{\underline{i},k-1}\tr(\rho_{\underline{i},k-1}^2)-\tr(\rho_{k-1}^2)\\
        &=\sum_{\underline{i} \in I^p}(\pi_{\underline{i},k-1}\tr(\rho_{\underline{i},k-1}^2)-2\tr(\rho_{k-1}^2 a_{\underline{i}}^*a_{\underline{i}})+\pi_{\underline{i},k-1}\tr(\rho_{k-1}^2))\\
        &=\sum_{i \in I^p}\pi_{\underline{i},k-1}\tr \left (\rho_{k-1} \left (\frac{a^*_{\underline{i}}a_{\underline{i}}}{\pi_{\underline{i},k-1}}-\mathbf{1}\right ) \rho_{k-1}\left (\frac{a^*_{\underline{i}}a_{\underline{i}}}{\pi_{\underline{i},k-1}}-\mathbf{1}\right ) \right )=\\
        &=\sum_{\underline{i} \in I^p}\pi_{\underline{i},k-1}\tr \left ( \left (\sqrt{\rho_{k-1}}\left (\frac{a_{\underline{i}}^*a_{\underline{i}}}{\pi_{\underline{i},k-1}}-\mathbf{1}\right ) \sqrt{\rho_{k-1}}\right )^2 \right )\geq 0.
    \end{split}\]
    
    Since $0 \leq g_n\leq 1$, Doob's convergence theorem ensures that there exists $\lim_{n \rightarrow +\infty}g_n=:g_\infty$ $\PP_\rho$-a.s. and in $L^1(\PP_\rho)$, therefore $G_\infty:=\lim_{n\rightarrow +\infty}G_n=\mathbb{E}_\rho[g_\infty]$. We set $g_\infty:=0$ when the limit does not exists.
    
    Using the decomposition of $g_n$ in Eq. \eqref{eq:deco}, we can write
    $$G_n=\sum_{k=0}^{p-1}G_k-\sum_{k=1}^{p-1}G_{n-k}-\sum_{k=1}^{n}H^p_k, \text{ where } H^p_k:=\mathbb{E}_\rho[h^p_k].$$
    Since $0 \leq G_n \leq 1$, $H^p_k \geq 0$ and $|\sum_{k=0}^{p-1}G_k-\sum_{k=1}^{p-1}G_{n-k}| \leq 2p$, the following must hold true
    $\lim_{k\rightarrow +\infty}H^p_k=0.$
    Therefore, modulo passing to a subsequence $k_j$, one has
    $$\lim_{j\rightarrow +\infty}h^p_{k_j}=0, \, \forall p \geq 1 \quad \PP_\rho \, a.s.$$
    and item 3. follows denoting $n_j:=k_j-1.$

    Since the trajectory does not purify, there must exist a strictly positive constant $c$ such that $\PP_\rho(A_c)>0$, where $A_c:=\{g_\infty\geq c>0\}$. Let us consider any $\omega \in A$; for any $b$ such that $c >1-b^2>0$, there exists $\overline{n}(\omega)>0$ such that for every $n \geq \overline{n}(\omega)$ one has
    $$g_n(\omega)=1-\sum_{k=1}^{+\infty}\mu_k(\rho_n(\omega))^2\geq 1-b^2.$$
    Notice that
    \[ b^2 \geq \sum_{k=1}^{+\infty}\mu_k(\rho_n(\omega))^2 \geq \mu_1(\rho_n(\omega))^2,\]
    therefore $\mu_1(\rho_n(\omega))\leq b<1.$
    Moreover, for every $a^\prime$ such that $1-b>a^\prime>0$ consider $N$ such that
    $$\PP_\rho(\{S_\infty^N\geq b+a^\prime \}   \cap A_c)>0$$
    (this is possible thanks to Lemma \ref{lem:conc}) and define $A:=\{S_\infty^N\geq b+a^\prime\}   \cap A_c.$ For every $\omega \in A$, one has that for every $a^{\prime\prime}$ such that $a^\prime >a^{\prime\prime} >0$ there exists $K(\omega)\geq 0$ such that for every $n \geq K(\omega),$
    $$b+a^{\prime\prime} \leq S_n^N(\omega)=\mu_1(\rho_n(\omega))+\sum_{k =2}^N\mu_k(\rho_n(\omega)) \leq b+\sum_{k =2}^N\mu_k(\rho_n(\omega)); $$
    therefore,
    $$\mu_2(\rho_n(\omega)) \geq \frac{1}{N-1}\sum_{k =2}^N\mu_k(\rho_n(\omega)) \geq \frac{a^{\prime\prime}}{N-1}=:a$$
    and we are done.
    \end{proof}

    Let us consider a trajectory $\rho_n(\omega)$ for $\omega \in A$; notice that all the properties listed in items 2.-4. of Lemma \ref{lem:A} are inherited by subsequences of $(n_j)_{j \geq 1}$.
    
    Let us denote by $\mu^1_{j} \geq \mu_j^2$ the two biggest eigenvalues of $\rho_{n_j}(\omega)$, which we called $\mu_1(\rho_{n_j}(\omega))$ and $\mu_2(\rho_{n_j}(\omega))$ so far; moreover, let $q^1_j$, $q^2_j$ be two orthogonal one-dimensional projections such that $\rho_{n_j}(\omega)q^i_{j}=\mu^i_j q^i_j$ for $i=1,2$. We introduce the notation $p_j:=q^1_j+q^2_j$.
    
 We denote by $I^p_0$ the set of indices $\underline{i} \in I^p$ such that $\lim_{j \rightarrow +\infty}\pi_{\underline{i},n_j}(\omega)=0$. Without loss of generality, we can assume that the following holds true:
 
 \begin{equation}\label{eq:picond}
 \text{either }\underline{i} \in I^p_0 \text{ or there exists $\epsilon_{\underline{i}}>0$ such that $\pi_{\underline{i},n_j}(\omega) \geq \epsilon_i$.}
 \end{equation}
 
 Indeed, once we label the indices in $\bigcup_{p \geq 1}I^p\setminus I_0^p$ as $\underline{i}_1,\dots, \underline{i}_k, \dots $ we can pick a subsequence of $n_j$, that we denote by $l_{11},\dots, l_{1k},\dots,$ such that $\pi_{\underline{i}_1,l_{1j}}(\omega) \geq \epsilon_1$ for some $\epsilon_1>0$. Then we pass to $\underline{i}_2$: if $\lim_{j\rightarrow +\infty}\pi_{\underline{i}_2,l_{1j}}(\omega)=0$, we add $\underline{i}_2$ to $I^p_0$, where $p$ is the length of $\underline{i}_2$; otherwise we can extract a subsequence $l_{2j}$ from $l_{1j}$ such that $\pi_{\underline{i}_2,l_{2,j}}(\omega) \geq \epsilon_2$ for some $\epsilon_2>0$. We can do this for every $\underline{i} \in \bigcup_{p \geq 1}I^p \setminus I_0^p$. The subsequence $\{l_{kk}\}_{k \geq 1}\subseteq \{n_j\}$ has the property that we required.

 We are now ready to prove the statement. If $\underline{i} \in \bigcup_{p \geq 1}I^p_0$ , then
 \[\begin{split}   
 0&=\lim_{j \rightarrow +\infty}\pi_{\underline{i},n_j}(\omega)=\lim_{j \rightarrow +\infty}\tr(\rho_{n_j}(\omega) a_{\underline{i}}^*a_{\underline{i}})\geq \limsup_{j \rightarrow +\infty}\mu^2_{j}\tr(p_{j} a_{\underline{i}}^*a_{\underline{i}})\geq a \limsup_{j \rightarrow +\infty}\tr(p_{j} a_{\underline{i}}^*a_{\underline{i}})\\
 &=a \limsup_{j \rightarrow +\infty}\tr(p_{j} a_{\underline{i}}^*a_{\underline{i}})=
 a \limsup_{j \rightarrow +\infty}\|p_{j} a_{\underline{i}}^*a_{\underline{i}}p_j\|_1,\end{split}\]
 therefore
 $$\lim_{j\rightarrow +\infty}\|p_ja_{\underline{i}}^*a_{\underline{i}}p_j-\lambda_{\underline{i},j}p_j\|_\infty=0$$
 for $\lambda_{\underline{i},j}=0$ (since $p_j$ is finite, the two norms are comparable).
 On the other hand, if $\underline{i} \in I^p\setminus I^p_0$ for some $p \geq 1$, then
 $$\lim_{j \rightarrow +\infty}\tr \left (\rho_{n_j}(\omega) \left (a^*_{\underline{i}}a_{\underline{i}}-\pi_{\underline{i},k_j}(\omega)\mathbf{1}\right ) \rho_{n_j}(\omega)\left (a^*_{\underline{i}}a_{\underline{i}}-\pi_{\underline{i},n_j}(\omega)\mathbf{1}\right )\right )=0$$
 and, if we define $\widetilde{\rho}_{n_j}(\omega):=p_j\rho_{n_j}(\omega) p_j$, one has
 \[\begin{split}&\tr \left (\rho_{n_j}(\omega) \left (a^*_{\underline{i}}a_{\underline{i}}-\pi_{\underline{i},k_j}(\omega)\mathbf{1}\right ) \rho_{n_j}(\omega)\left (a^*_{\underline{i}}a_{\underline{i}}-\pi_{\underline{i},n_j}(\omega)\mathbf{1}\right )\right )\\
 &\geq \tr \left (\widetilde{\rho}_{n_j}(\omega) \left (a^*_{\underline{i}}a_{\underline{i}}-\pi_{\underline{i},n_j}(\omega)\mathbf{1}\right ) \widetilde{\rho}_{n_j}(\omega)\left (a^*_{\underline{i}}a_{\underline{i}}-\pi_{\underline{i},n_j}(\omega)\mathbf{1}\right )\right )\\
 &\geq (\mu_{j}^2)^2\|p_j(a^*_{\underline{i}}a_{\underline{i}}-\pi_{\underline{i},n_j}\mathbf{1} )p_j\|_2^2\end{split}\]
 and we are done (again using the fact that, since $p_j$ is finite, $\|p_j(a^*_{\underline{i}}a_{\underline{i}}-\pi_{\underline{i},n_j}\mathbf{1} )p_j\|_2$ and $\|p_j(a^*_{\underline{i}}a_{\underline{i}}-\pi_{\underline{i},n_j}\mathbf{1} )p_j\|_\infty$ are comparable). 
\end{proof}

\section{Bringing `dark' subspaces back}

\begin{proof}[Proof of Proposition \ref{prop:sm}]
The implication $2. \Rightarrow 1.$ is trivial.

\bigskip Notice that for every $n,m \geq 0$, one has $0 \leq X_n^m \leq 1$, therefore by Doob's convergence theorem there exists a sequence of random variables $(X^m_\infty)_{m \geq 0}$ such that
$$\lim_{n \rightarrow +\infty} X_n^m=X_\infty^m \quad \PP_\rho-\text{a.s..}$$
Moreover, by the fact that $s_m \uparrow \mathbf{1}$, one has that
$X_n^m \uparrow 1$ $\PP_\rho$-a.s.; this convergence passes to $X_\infty^m$ as well. Therefore
$$\PP_\rho \left ( B \right )=1, \text{ where } B:=\{\forall \delta \in (0,1), \, \exists M_\delta, N_\delta \geq 1 : \, \forall n \geq N_\delta, \, X_n^M \geq 1-\delta\}$$
and we can repeat the proof of the implication $1. \Rightarrow 2.$ of Theorem \ref{theo:main} considering a trajectory corresponding to $\omega \in A \cap B$, where $A$ is the event appearing in the statement of Lemma \ref{lem:A}.

In order to conclude, it is enough to show that that there exists a projection $p$ such that $p_j \xrightarrow[j \rightarrow +\infty]{\|\cdot \|_\infty}p$. We recall that the space of compact linear operators endowed with the uniform norm is a separable Banach space and that its dual is isometric to the space of trace class operators endowed with the trace norm. By Banach-Alaouglu's theorem, without loss of generality, we can assume that $(q_j^1)_{j\geq 0}$ and $(q_j^2)_{j \geq 0}$ converge in the ${\rm w}^*$-topology to two nonnegative trace class operators $q^1$ and $q^2$ (we can reduce to this situation passing to subsequences). Moreover, for every $\delta$, we can pick $M_\delta:=M_\delta(\omega)$ and $J_\delta(\omega)$ such that for every $j \geq J_\delta:=J_\delta(\omega)$,
$$0 <a^\prime \leq \mu_j^2 \leq \mu_j^1 \leq b^\prime <1, \text{ with } a^\prime <a \text{ and } b^\prime >b$$
and 
\[\begin{split}
    \mu_j^1 \tr(q_j^1 s_{M_\delta})+\mu_2^1 \tr(q_j^2 s_{M_\delta})+1-\mu_j^1-\mu_j^2&\geq\mu_j^1 \tr(q_j^1 s_{M_\delta})+\mu_2^1 \tr(q_j^2 s_{M_\delta})+\tr(\rho_{k_j}(\omega)-\widetilde{\rho}_{k_j}(\omega) s_{M_\delta})\\
    &=\tr(\rho_{k_j}(\omega)s_{M_\delta})\geq 1-\delta.
    \end{split}\]
Therefore, if we define $\gamma_j:=\mu_j^1/(\mu_j^1+\mu_j^2)$, we get that
$$\gamma_j \tr(q_j^1 s_{M_\delta})+(1-\gamma_j) \tr(q_j^2 s_{M_\delta}) \geq 1-\frac{\delta}{\mu_j^1+\mu_j^2} \geq 1-\frac{\delta}{2a^\prime}.$$
Since $1 \geq \tr(q_j^1 s_{M_\delta}), \tr(q_j^2 s_{M_\delta})$, one has
\[
    \min\{\gamma_j\tr(q_j^1 s_{M_\delta})+(1-\gamma_j),\gamma_j+(1-\gamma_j)\tr(q_j^2 s_{M_\delta})\} \geq \gamma_j \tr(q_j^1 s_{M_\delta})+(1-\gamma_j) \tr(q_j^2 s_{M_\delta})\geq 1-\frac{\delta}{2a^\prime}.\]
Therefore
$$\tr(q_j^1 s_{M_\delta})\geq 1-\frac{\delta}{2a^\prime\gamma_j}\geq 1-\frac{\delta\cdot b}{2a^{\prime 2}}, \quad \tr(q_j^2 s_{M_\delta})\geq 1-\frac{\delta}{2a^{\prime }(1-\gamma_j)}\geq 1-\frac{\delta\cdot b}{2a^{\prime 2}},$$
where we used the fact that $a \leq \mu_j^2 \leq \mu_j^1 \leq b.$

Summing up and with a slight abuse of notation, we can state in the following way what we just deduced: for every $\delta^\prime \in (0,1)$, there exist $M_{\delta^\prime}:=M_{\delta^\prime}(\omega), \, J_{\delta^\prime}:=J_{\delta^\prime}(\omega) \geq 1$ such that for every $j \geq J_{\delta^\prime}$, one has
$$\min\{\tr(q_j^1 s_{M_{\delta^\prime}}), \tr(q_j^2 s_{M_{\delta^\prime}})\} \geq 1-\delta^\prime.$$
The previous statement ensures a tightness property for the sequences $(q^{i}_j)_{j \geq 0}$, for $i=1,2$, therefore one can easily see that
$$q_j^i \xrightarrow[j\rightarrow +\infty]{ \|\cdot \|_1}q^i,$$

where $q^1$ and $q^2$ are orthogonal rank one projections (we recall that the set of pure states is a closed set in the topology induced by the trace class norm). Therefore, $p:=q^1+q^2$ is a rank two projection and $p_j \xrightarrow[j \rightarrow +\infty]{}p$ both in trace class and uniform norm and we are done.
\end{proof}

\bibliographystyle{abbrv}
\bibliography{biblio}

\end{document}